\newtheorem{alemma}{Lemma}
\newtheorem{aproposition}{Proposition}
\newtheorem{atheorem}{Theorem}
\theoremstyle{definition}
\newtheorem{adefinition}{Definition}
\newtheorem{aexample}{Example}
\theoremstyle{remark}
\newtheorem{aremark}{Remark}
\definecolor{dukeblue}{rgb}{0.0, 0.0, 0.61}
\begin{document}

\date{April 17, 2023}

\author{Ata Atay\thanks{Department of Mathematical Economics, Finance and Actuarial Sciences, University of Barcelona, and Barcelona Economic Analysis Team (BEAT), Spain. E-mail: {aatay@ub.edu}} \and Sylvain Funck\thanks{CORE/LIDAM and CEREC, UCLouvain, Belgium. E-mail: {sylvain.funck@gmail.com}}  \and Ana Mauleon\thanks{CORE/LIDAM and CEREC, UCLouvain, Belgium. E-mail: {ana.c.mauleon@gmail.com}} \and Vincent Vannetelbosch\thanks{CORE/LIDAM, UCLouvain, Belgium. E-mail: {vincent.vannetelbosch@uclouvain.be}}}
\title{\textbf{Matching markets with farsighted couples}}
\maketitle

\begin{abstract}
We adopt the notion of the farsighted stable set to determine which matchings are stable when agents are farsighted in matching markets with couples. We show that a singleton matching is a farsighted stable set if and only if the matching is stable. Thus, matchings that are stable with myopic agents remain stable when agents become farsighted. Examples of farsighted stable sets containing multiple non-stable matchings are provided for markets with and without stable matchings. For couples markets where the farsighted stable set does not exist, we propose the DEM farsighted stable set to predict the matchings that are stable when agents are farsighted.

\noindent \textbf{Keywords}: matching with couples $\cdot$ stable sets $\cdot$ farsighted agents\\
\noindent \textbf{JEL classification}: C70 $\cdot$ C78 $\cdot$ D47
\end{abstract}

\thispagestyle{empty}\newpage

\pagenumbering{arabic}

\section{Introduction}\label{sec:intro}

Gale and Shapley (1962) propose the simple two-sided matching model with two disjoint sets of agents and the problem is to match agents from one side of the market with agents from the other side.\footnote{Roth and Sotamayor (1990) provide a complete study of the literature of two-sided matchings.} One of the main applications of this model is the entry-level labor market for medical doctors in the US (see Roth, 2003). Since the 1950s, medical students are matched to intern positions at hospitals via the National Resident Matching Program that uses the deferred acceptance (DA) algorithm of Gale and Shapley (1962) and produces a stable matching (i.e., a matching such that no doctors and hospitals have incentive to deviate from their current matching) when each doctor submits a separate preference list. However, in the presence of couples\footnote{See Roth and Peranson (1999) for details on the new design of the algorithm.} that submit joint preference lists over pairs of hospitals, a stable matching need not exist (see Roth, 1984). 

Stability has been a central problem in the literature of matching with couples. A stable matching always exists in couples markets when preferences are weakly responsive, i.e., when a unilateral improvement on acceptable positions for one partner's job is beneficial for the couple as well (Klaus and Klijn, 2005; Klaus, Klijn and Nakamura, 2009). Kojima, Pathak and Roth (2013) show that in large markets with couples a stable matching exists when there are relatively few couples and preference lists are sufficiently short relative to market size.  However, if the number of couples grows in a linear rate, a stable matching does not exist with high probability (Ashlagi, Braverman and Hassidim, 2014). Nguyen and Vohra (2018) provide a solution to this problem by perturbing slightly the capacity of hospitals.\footnote{Bir\'{o} and Klijn (2013) provides a comprehensive overview of matching markets with couples.} Furthermore, under certain conditions on preferences (weakly responsiveness in Klaus and Klijn, 2007; and restricted complementarity in Tello, 2023), it is possible to reach a stable matching from any arbitrary matching by satisfying blocking coalitions.

The notion of stable matching is a myopic notion since agents do not anticipate that individual and coalitional deviations can be followed by subsequent deviations. Other solution concepts for two-sided matching problems like the von Neumann-Morgenstern (vNM) stable set (Ehlers, 2007) or the reformulation of the vNM stable set in Herings, Mauleon and Vannetelbosch (2017), called CP vNM set, are myopic notions based on the direct dominance relation and neglect the destabilizing effect of indirect dominance relations introduced by Harsanyi (1974) and Chwe (1994). Indirect dominance captures the idea that farsighted agents can anticipate the actions of other coalitions and consider the end matching that their deviations may lead to. To the best of our knowledge, this is the first paper incorporating farsightedness in the literature on matchings with couples. 

We adopt the notion of the farsighted stable set to determine which matchings are stable when agents are farsighted in couples markets. A set of matchings is a farsighted stable set if no matching inside the set is indirectly dominated by another matching in the set (internal stability) and any matching outside the set is indirectly dominated by some matching in the set (external stability). Mauleon, Vannetelbosch and Vergote (2011) characterize the farsighted stable sets for marriage markets as all singletons that contain a stable matching. Klaus, Klijn and Walzl (2011) show that, for roommate markets, a singleton matching is a farsighted stable set if and only if the matching is stable.\footnote{Farsighted stability has also been studied in the case of hedonic games (Diamantoudi and Xue, 2004), non-transferable utility games (Ray and Vohra, 2015), and matching problems including school choice problems (Atay, Mauleon and Vannetelbosch, 2022a) and priority-based matching problems (Atay, Mauleon and Vannetelbosch, 2022b).} Thus, stable matchings remain stable when agents become farsighted in marriage markets and in roommate problems. Do stable matchings survive when agents become farsighted in couples markets? 

In this paper, we first extend the characterization of indirect dominance of Mauleon, Vannetelbosch and Vergote (2011) and Klaus, Klijn and Walzl (2011) to couples markets. We then show that, for couples markets, a singleton matching is a farsighted stable set if and only if the matching is stable. Thus, the property of stability of a matching is preserved when agents become farsighted. Moreover, a farsighted stable set cannot contain exactly two matchings. However, other farsighted stable sets with multiple non-stable matchings can exist in couples markets with stable matchings. We also study farsighted stable sets in couples markets without stable matching, and we provide an example where a non-singleton farsighted stable exists and another one without a farsighted stable set. We then introduce the DEM farsighted stable set of Herings, Mauleon and Vannetelbosch (2009, 2010) that replaces the internal stability condition in the definition of the farsighted stable set by deterrence of external deviations and minimality. Contrary to the farsighted stable set, the DEM farsighted stable set always exists. 

The paper is organized as follows. In Section \ref{sec:model}, we present the model of two-sided matching with couples. In Section \ref{sec:farsighted}, we introduce the notion of farsighted stable set. In Section \ref{stablematchings}, we study couples markets with stable matchings and we characterize singletons farsighted stable sets. We also provide an example of market with stable matching with a non-singleton farsighted stable set. In Section \ref{sec:unsolvable}, we study couples markets without stable matchings and we introduce the DEM farsighted stable set as an alternative to the non-existence of the farsighted stable set. Section \ref{sec:conc} concludes.

%\section{Important results in the literature}
%\subsection{Couple market}
%\begin{itemize}
%\item Weakly responsive preferences is a maximal domain for existence of stability for couples market with restricted strictly unemployment averse couples[Klaus and Klijn,2005], [Klaus and Klijn,2009, Corrigendum]
%\item If $(P^H,P^C)$ is a weakly responsive couple market, then any matching stable for an associated singles market $(P^H,P^S(P^C))$ is also stable for $(P^H,P^C)$ [Klaus and Klijn,2007]
%\item If preferences are weakly responsive, starting from any matching, there exists a path formed by  blocking coalitions leading to a stable matching. [Klaus and Klijn,2007]
%\item If preferences are not weakly responsive, the set of stable matchings could be empty. [Klaus and Klijn,2005]
%\item By observing real data, we see that couples do not often respect weak responsiveness. [Kojima et al, 2010]
%\end{itemize}

%\subsection{Marriage market}

%\begin{itemize}
%\item If $\mu$ is individually rational, $\mu \gg \mu'\Leftrightarrow \nexists (i,\mu'(i))$ such that both strictly prefer $\mu'$ to $\mu$. [Mauleon, Vannetelbosch and Vergote, 2011]
%\item Each element $\mu$ of the Core forms a vNM farsighted stable set $\lbrace \mu \rbrace$. [Mauleon, Vannetelbosch and Vergote, 2011]
%\end{itemize}

\section{Matching with couples}\label{sec:model}

A couples market consists of a set of hospitals $H=\{h_{1},\ldots, h_{m}\}$ and a set of students by $S=\{s_1, \ldots, s_{2n}\}$ partitioned into a set of couples $C=\{ c_1,\cdots,c_n \}=\{ (s_1,s_2),\cdots, (s_{2n-1},s_{2n})\}$. Let $s$ be a generic student and $h$ be a generic hospital.

Each hospital $h\in H$ has exactly one position to fill \footnote{We will discuss in the conclusion the case of hospitals with a capacity larger than one.} and has a strict, complete and transitive preference relation $\succeq_h$ represented by a strict ordering over the set of students $S$ and the prospect of having its position unfilled, denoted by $\emptyset$. Let $P_{h}$ denote $h$'s preferences over $S \cup \{ \emptyset\}$. A student $s$ is \textit{acceptable} for hospital $h$ if $s\succ_h \emptyset$. The preferences of all hospitals are denoted by $P_H = \{P_{h}\}_{h \in H}$.

Each couple $c\in C$ has a strict, complete and transitive preference relation $\succeq_c$ over the elements in $\mathcal{H}:= [H\cup\lbrace u\rbrace \times H\cup\lbrace u\rbrace]\setminus\lbrace(h,h)|h\in H\rbrace$ where $u$ denote the prospect of being unemployed. The preferences over the possible combinations of hospitals that a couple $c$ may reach are represented by a strict ordering $P_c$ over $\mathcal{H}$. A pair of hospitals $(h,h')\in\mathcal{H}$ are \textit{acceptable} for $c$ if $(h,h')\succ_c (u,u)$.\footnote{Whenever we use the strict part $\succ$ of a preference relation, we assume that we compare different elements in $S \cup \{ \emptyset\}$ or $\mathcal{H}$.} The preferences of all couples are denoted by $P_C = \{P_{c}\}_{c \in C}$. Let $P=(P_{H},P_{C})$ denote a couples market.\footnote{By simplicity, we assume that all students are members of a couple. However, we could easily add single students into the model since a single student corresponds to a couple where one of the member finds no hospital acceptable.} Let $\mathcal{P}$ be the set of all couples markets. 

A \textit{matching} $\mu$ for a couples market $P$ is a function $\mu:S\cup H\rightarrow S\cup H\cup \lbrace u,\emptyset\rbrace$ such that:
\begin{enumerate}[label=(\textbf{\roman*})]
\item for any $s\in S$, $\mu(s)\in H\cup\lbrace u\rbrace$,
\item for any $h\in H$, $\mu(h)\in S\cup\lbrace \emptyset\rbrace$,
\item for any $c=(s,s')\in C$, $\mu(c)=(\mu(s),\mu(s'))\in \mathcal{H}$,
\item for any $s\in S$ and $h\in H$, $\mu(\mu(s))=s$ and $\mu(\mu(h))=h$.
\end{enumerate} 

The set of matchings is denoted by $\mathcal{M}$. A student $s$ and a hospital $h$ are called \textit{mates} if $\mu(s)=h$, or equivalently $\mu(h)=s$. 

A hospital $h$ is a \textit{one-sided blocking coalition} to $\mu\in\mathcal{M}$ if $h$ is matched with an unacceptable student in $\mu$; i.e. $\emptyset \succ_{h}\mu(h)$. A couple $c$ is a \textit{one-sided blocking coalition} to $\mu\in\mathcal{M}$ if $c$ is better off by unmatching one or both partners of the couple, i.e., $c=(s_1,s_2)\in C$ is a one-sided blocking coalition if either $\mu(s_1,s_2)\prec_c (u,u)$,  $\mu(s_1,s_2)\prec_c (\mu(s_1),u)$ or $\mu(s_1,s_2)\prec_c (u,\mu(s_2))$. A matching $\mu\in\mathcal{M}$ is \textit{individually rational} if there is no one-sided blocking coalition at $\mu$.

A couple $c=(s_1,s_2)$ and two hospitals $h_1,h_2$ form a \textit{two-sided blocking coalition} to $\mu\in\mathcal{M}$ if the couple $c$ strictly prefers $(h_1,h_2)$ to their mates at $\mu$ and $h_1$, $h_2$ prefer respectively $s_1$ and $s_2$ rather than their mates at $\mu$. Formally, $\lbrace h_1,h_2,c=(s_1,s_2)\rbrace$ are a two-sided blocking coalition if: (i) $(h_1,h_2)\succ_c\mu(s_1,s_2)$, (ii) $s_1\succ_{h_1}\mu(h_1)$ and (iii) $s_2\succ_{h_2}\mu(h_2)$. A hospital $h_1$ and a couple $c=(s_1,s_2)$ are a \textit{two-sided blocking coalition} to $\mu\in\mathcal{M}$ if one partner in the couple strictly prefers $h_1$ to her mate at $\mu$ and $h_1$ prefers that student to his mate at $\mu$; i.e., if the following conditions hold: (i) $(h_1,\mu(s_2))\succ_c\mu(s_1,s_2)$ and (ii) $s_1\succ_{h_1}\mu(h_1)$. 
 
A matching $\mu\in\mathcal{M}$ is \textit{stable} if it is not blocked by any one-sided or two-sided blocking coalition. An alternative way to define a stable matching is by means of the direct dominance relation. 

Given a matching $\mu\in\mathcal{M}$, a coalition $T\in H\cup C$ can enforce a matching $\mu'\in\mathcal{M}$ over $\mu$ if any match in $\mu'$ that does not exist in $\mu$ should be between players in $T$. Moreover, if a match in $\mu$ is destroyed, then one of the two players involved in that match should belong to $T$. The next definition formalizes these ideas.

\begin{adefinition}\label{Def:Enfor} 
Given a matching $\mu$, a coalition $T\in H\cup C$ can enforce a matching $\mu'$ over $\mu$ if the following conditions holds: For every $c\in C$ and $h\in H$, $(i)$ $\mu'(h)\in c$ and $\mu'(h)\neq\mu(h)$ implies $\lbrace c, h\rbrace\subseteq T$ and $(ii)$ $\mu(h)\in c$ and $\mu'(h)=\emptyset$ implies $\lbrace c,h\rbrace\cap T\neq\emptyset$.
\end{adefinition}
Using this notion, we can describe the notion of direct dominance. 
\begin{adefinition}\label{Def:Direct} 
A matching $\mu$ is directly dominated by $\mu'$, or $\mu' > \mu$, if there exists a coalition $T\in H\cup C$ that can enforce $\mu'$ over $\mu$ and such that $\mu' \succ_i  \mu$ for all $i \in T$.
\end{adefinition}
A matching $\mu\in\mathcal{M}$ is stable if it is not directly dominated by any other matching. 

Given a matching $\mu\in\mathcal{M}$ with student $s\in S$ assigned to hospital $h\in H$, so $\mu(s)=h$, the matching $\mu'$ that is identical to $\mu$, except that the match between $s$ and $h$ has been destroyed by either $s$ or $h$, is denoted $\mu'=\mu-(s,h)$. Given a matching $\mu\in\mathcal{M}$ such that $s\in S$ and $h\in H$ are not matched to one another, the matching $\mu'$ that is identical to $\mu$, except that the pair $(s,h)$ has formed at $\mu'$ and their partners at $\mu$; i.e., $\mu(s)$ and $\mu(h)$, become unmatched at $\mu'$, is denoted by $\mu'=\mu+(s,h)$.

\section{Farsighted stable sets}\label{sec:farsighted}

The concept of stable matching is a myopic notion since the players do not anticipate that individual and coalitional deviations could be followed by subsequent deviations. This concept and other concepts based on the direct dominance relation neglect the destabilizing effect of indirect dominance relations as introduced by Harsanyi (1974) and Chwe (1994). Indirect dominance captures the idea that farsighted players can anticipate the actions of other coalitions and consider the end matching that their deviations may lead to.

\begin{adefinition}\label{def:indirect_dom}
A matching $\mu$ is indirectly dominated by $\mu'$, or $\mu'\gg\mu$, if there exists a sequence of matchings $\mu_0,...,\mu_K$ (with $\mu_0=\mu$ and $\mu_K=\mu'$) and a sequence of coalitions $T_0,...,T_{K-1}\subseteq H\cup C$ such that for any $k\in {1,...,K}$, the following conditions hold : 
\begin{enumerate}
\item Coalition $T_k$ can enforce the matching $\mu_{k+1}$ over $\mu_k$, 
\item For all $i\in T_k$, $\mu'\succ_i \mu_k$. 
 \end{enumerate}
\end{adefinition}  

The indirect dominance relation is denoted by $\ll$. Obviously, if $\mu'>\mu$, then $\mu'\gg\mu$.\footnote{Mauleon, Molis, Vannetelbosch and Vergote (2014) introduce the condition of dominance invariance to guarantee the equivalence between the direct and the indirect dominance relations. Under dominance invariance the differences between a farsighted solution concept and its myopic counterpart vanish.} Based on the indirect dominance relation, the farsighted stable set (see Chwe (1994) and Mauleon, Vannetelbosch and Vergote (2011)) is defined as the set of matchings satisfying internal and external stability conditions.

\begin{adefinition}\label{Def:farsightedstableset} 
Let $\langle H, C, P \rangle$ be a couples market. A set of matchings $V\subseteq\mathcal{M}$ is a farsighted stable set if it satisfies the following conditions:
\begin{enumerate}[label=(\textbf{\roman*})]
\item For every $\mu\in V$, there is no $\mu'\in V$ such that $\mu'\gg\mu$, 
\item For every $\mu\notin V$, there exists $\mu'\in V$ such that $\mu'\gg\mu$.
\end{enumerate}
\end{adefinition} 

Condition (i) in Definition \ref{Def:farsightedstableset} is the internal stability (IS) condition: no matching inside the set is indirectly dominated by a matching belonging to the set. Condition (ii) is the external stability (ES) condition establishing that any matching outside the set is indirectly dominated by some matching belonging to the set. 

\begin{aexample}[Klaus and Klijn, 2007] Consider a couples market with $H=\{h_1,h_2,h_3\}$ and $C=\{(s_1,s_2),(s_3,s_4)\}$. Hospitals and couples' preferences are as follows:

\begin{table}[!ht]
\centering
\begin{tabular}{cccc}
\multicolumn{3}{c}{Hospitals}\\
\hline
$P_{h_{1}}$ & $P_{h_{2}}$ & $P_{h_{3}}$ \\
\hline 
$s_2$ & $s_2$ & $s_2$ \\
$s_1$ & $s_3$ & $s_4$ \\
      & $s_1$ & $s_1$
\end{tabular}
~\qquad~      
    \begin{tabular}{cc}
\multicolumn{2}{c}{Students}\\
\hline
$P_{(s_1, s_2)}$& $P_{(s_3, s_4)}$ \\
\hline
 $(h_3,h_1)$ & $(h_2,h_3)$ \\ 
$(h_2,h_3)$ &   \\ 
 $(h_1,h_2)$ & 
\end{tabular} 
\end{table}

There exist four individually rational matchings: $\mu_1=\{(s_1,h_3),(s_2,h_1),(s_3,u),$ $(s_4,u)\}$, $\mu_2=\{(s_1,h_1),(s_2,h_2),(s_3,u),(s_4,u)\}$, $\mu_3=\{(s_1,u),(s_2,u),(s_3,h_2),(s_4,h_3)\}$ and $\mu_4=\{(s_1,h_2),(s_2,h_3),(s_3,u),(s_4,u)\}$. At $\mu_1$, the coalition $\{h_2,h_3,(s_3,s_4)\} $ would be better off in $\mu_3$ and then form a two-sided blocking coalition of $\mu_1$. Since the coalition $\{h_2,h_3,(s_3,s_4)\} $ can enforce $\mu_3$ from $\mu_1$, we have that $\mu_3>\mu_1$. At $\mu_3$, the coalition $\{h_1,h_2,(s_1,s_2)\} $ form a two-sided blocking coalition of $\mu_3$ and can enforce $\mu_2$ so that $\mu_2>\mu_3$.\footnote{Notice that if $h_3$ would have been able to anticipate the next deviation from $\mu_3$ to $\mu_2$, it would not have deviated from $\mu_1$ to $\mu_3$. Thus, $\mu_1\not\ll\mu_2$.} From $\mu_2$, the coalition $\{h_3,h_1,(s_1,s_2)\} $ form a two-sided blocking coalition of $\mu_2$ and can enforce $\mu_1$ so that $\mu_1>\mu_2$. Thus, we have that $\mu_1>\mu_2>\mu_3>\mu_1\not<\mu_2\not<\mu_3\not<\mu_1$.

In this example, $\mu_4$ is the unique stable matching. However, $\mu_4$ does not directly dominate the three other individually rational matchings. But, starting at $\mu_1$, looking forward to $\mu_4$, $h_3$ will destroy its match with $s_1$ reaching the matching $\mu_1-(s_1,h_3)$. After the deviation of $h_3$, the couple $(s_1,s_2)$ is matched with $(u,h_1)$, which is unacceptable for them. At $\mu_1-(s_1,h_3)$, the coalition $\{h_2,h_3,(s_1,s_2)\} $ can enforce and will be better off at $\mu_4$. Hence, the sequence of matchings $\mu^0,\mu^1,\mu^2$ (where $\mu^0=\mu_1$, $\mu^1=\mu^0-(s_1,h_3)$, and $\mu^2=\mu^1+\{h_2,h_3,(s_1,s_2)\}=\mu_4$) and the coalitions $T^0,T^1$ with $T^0=\{h_3\}$ and $T^1=\{h_2,h_3,(s_1,s_2)\}$ are such that $T^0$ can enforce $\mu^1$ over $\mu^0$ and coalition $T^1$ can enforce $\mu^2$ over $\mu^1$. Moreover, $\mu^2\succ \mu^0$ for $T^0$ and $\mu^2\succ \mu^1$ for $\{h_2,h_3,(s_1,s_2)\} $. Hence, we have that $\mu^2=\mu_4\gg \mu_1=\mu^0$. Similar arguments can be used to show that $\mu_4$ indirectly dominates $\mu_2$, $\mu_3$ as well as any other matching $\mu\neq \mu_4$. Therefore, $\lbrace\mu_4\rbrace$ is a farsighted stable set.\footnote{Klaus and Klijn (2007) used this example to show that a (myopic) path to a stable matching obtained from satisfying blocking coalitions does not always exist.}

Since direct dominance implies indirect dominance, we also have that $\mu_1\gg\mu_2\gg\mu_3\gg\mu_1$. On the contrary, starting at $\mu_3$ no agent has incentive to deviate looking forward to $\mu_1$. Indeed, all agents matched at $\mu_3$ formed a two-sided blocking coalition to $\mu_1$. So, $\mu_3\not\ll\mu_1 $. Similar arguments can be used to show that $\mu_1\not\ll\mu_2\not\ll\mu_3\not\ll\mu_1$. Thus, none of these three individually rational matchings form a singleton farsighted stable set. Since any singleton not individually rational matching does not indirectly dominate the individually rational matchings, we have that $\lbrace\mu_4\rbrace$ is the unique singleton farsighted stable set. Farsighted stable sets that contain two (or more) matchings do not exist. Notice that, in order to satisfy external stability, the set should at least contain two of the tree individually rational matchings. But then, since there exists no two individually rational matchings that do not indirectly dominate one another, internal stability will never be satisfied in any set containing two of the three individually rational matchings. Therefore, the unique stable matching of this couples market when agents are myopic is also the only farsighted stable set.
\end{aexample}

In this example, we have shown that a singleton stable matching is the unique farsighted stable set. Mauleon, Vannetelbosch and Vergote (2011) characterized farsighted stable sets for marriage markets and showed that a set of matchings is a farsighted stable set if and only if it is a singleton stable matching. Klaus, Klijn and Walzl (2011) showed that in roommate problems a singleton is a farsighted stable set if and only if the matching is stable. In the next section we study whether these results hold for couples markets. Furthermore, we investigate the existence of non-singleton farsighted stable sets in couples markets with stable matchings. 

%\textbf{EXAMPLE NOT GOOD BECAUSE NOT UNEMPLOYMENT AVERSE => EXPLAIN WHY WEAK RESPONSIVE IS NOT GOOD AND FIND A GOOD EXAMPLE}

\section{Markets with stable matchings}\label{stablematchings} 
Next lemma shows that if a matching $\mu$ indirectly dominates another matching $\mu'$, then there is no blocking coalition to $\mu$ matched at $\mu'$.
\begin{alemma}
\label{lemma:nec}
If $\mu\gg\mu'$, then $\mu'$ does not contain a blocking coalition of $\mu$. 
\end{alemma}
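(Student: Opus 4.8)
The plan is to argue by contradiction. First I would assume that $\mu\gg\mu'$ and yet $\mu'$ contains a blocking coalition $B$ of $\mu$ -- that is, the agents of $B$ are matched at $\mu'$ exactly as the block prescribes (for a one-sided coalition, this should be read as the relevant hospital, or the relevant partner of the couple, being left unmatched at $\mu'$). The first observation to record is that this hypothesis forces every member of $B$ to strictly prefer $\mu'$ to $\mu$: a hospital in $B$ gets at $\mu'$ the student it wanted instead of its $\mu$-mate, and the couple in $B$ gets at $\mu'$ the pair of positions it wanted instead of $\mu(c)$; the one-sided cases are analogous.

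Next I would unpack $\mu\gg\mu'$: fix a sequence $\mu_0,\dots,\mu_K$ with $\mu_0=\mu'$ and $\mu_K=\mu$ together with coalitions $T_0,\dots,T_{K-1}$ such that $T_k$ enforces $\mu_{k+1}$ over $\mu_k$ and $\mu\succ_i\mu_k$ for every $i\in T_k$. Call the ``configuration of $B$'' the list of matches and non-matches that make up the block. Because $B$ blocks $\mu$, this configuration cannot be entirely present at $\mu=\mu_K$ (otherwise $B$ would not strictly improve upon $\mu$), while by hypothesis it is entirely present at $\mu_0=\mu'$. I would then let $k^*$ be the smallest index for which the configuration of $B$ fails to be intact at $\mu_{k^*+1}$; it is then intact at each of $\mu_0,\dots,\mu_{k^*}$, so in particular every member of $B$ still strictly prefers $\mu_{k^*}$ to $\mu$.

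The transition from $\mu_{k^*}$ to $\mu_{k^*+1}$ must therefore disturb the configuration of $B$, either by destroying a match $(s,h)$ that belongs to the block (so $\mu_{k^*}(h)=s\neq\mu_{k^*+1}(h)$) or by creating a match at an agent that the block keeps unmatched. Here I would apply the enforcement conditions of Definition \ref{Def:Enfor} to $(\mu_{k^*},\mu_{k^*+1},T_{k^*})$: a hospital that is rematched or newly matched must lie in $T_{k^*}$ by part $(i)$; a hospital that is emptied forces, by part $(ii)$, either itself or the couple of its former mate into $T_{k^*}$; and a couple one of whose partners is newly matched must lie in $T_{k^*}$ by part $(i)$. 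In each branch the agent so placed inside $T_{k^*}$ is the couple of $B$ (which strictly prefers $\mu_{k^*}$ to $\mu$), or a hospital of $B$ whose $\mu_{k^*}$-mate is exactly the student it prefers to its $\mu$-mate, or -- the degenerate case -- an ``outside'' hospital $\mu(s_j)$ that a one-sided couple block keeps fixed, whose mate is the same at $\mu_{k^*}$ and at $\mu$. In all cases this agent $i\in T_{k^*}$ satisfies $\mu\not\succ_i\mu_{k^*}$, which contradicts condition 2 of Definition \ref{def:indirect_dom}. Hence no such $B$ exists.

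The step I expect to be the main obstacle is precisely the case analysis just sketched: one has to run through the three types of blocking coalitions (a couple with two hospitals, a couple with one hospital, and the one-sided coalitions) and, within each, through the possible ways the critical transition can disturb the configuration, verifying each time that Definition \ref{Def:Enfor} really does force an offending agent into $T_{k^*}$ -- in particular handling the degenerate sub-case in which the authorizing hospital is not in $B$ but keeps the same mate, where the contradiction still comes from that hospital being unable to strictly prefer $\mu$ to $\mu_{k^*}$. Pinning down the correct reading of ``$\mu'$ contains a blocking coalition of $\mu$'' for the one-sided coalitions is the other point that needs care.
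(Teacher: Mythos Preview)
Your proposal is correct and follows essentially the same contradiction strategy as the paper: a blocking coalition of $\mu$ that is already realized at $\mu'$ has all its members (weakly) preferring $\mu'$ to $\mu$, so none of them can appear in an enforcing coalition $T_k$ along the $\mu'\to\mu$ path, yet the enforcement rules force at least one of them into such a coalition at the first step where the block's configuration is broken. The paper compresses this into two sentences (``they all prefer $\mu'$ to $\mu$, and then they will never unmatch''), while you spell out the first-violation index $k^*$ and walk through Definition~\ref{Def:Enfor} case by case, including the degenerate sub-case of the ``outside'' hospital $\mu(s_j)$ in a one-hospital block; that extra care is legitimate and arguably makes the argument tighter, but it is the same proof.
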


\begin{proof}
Suppose on the contrary that $\mu\gg\mu'$ and that there exists a two-sided blocking coalition $\lbrace h,h',c=(s,s')\rbrace$ such that $\mu'(c)=(h,h')$, $(h,h')\succ_c \mu(s,s')$, $s\succ_h \mu(h)$ and $s'\succeq_{h'}\mu(h')$. For $\mu$ to indirectly dominate $\mu'$, it must be that either $h$, or $h'$ or $c$ get unmatched along the path from $\mu'$ to $\mu$. But they all prefer $\mu'$ to $\mu$, and then they will never unmatch. Hence $\mu\not\gg\mu'$, a contradiction. The proof for a one-sided blocking coalition is similar. Since the member(s) of this one-sided coalition prefer to be unmatched that being matched at $\mu$, they will never match along the path from  $\mu'$ to $\mu$. Hence $\mu\not\gg\mu'$, a contradiction.	
\end{proof}

\begin{alemma}
\label{lemma:suff}
Consider any two matchings $\mu,\mu'\in\mathcal{M}$ such that $\mu$ is individually rational. Then $\mu \gg \mu'$ if there does not exist $h\in H$ and $c\in C$ with $\mu' (h)\in c$ such that both $c$ and $h$ strictly prefer $\mu'$ to $\mu$.
\end{alemma}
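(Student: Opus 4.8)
The plan is to construct, for an arbitrary matching $\mu'$ and an individually rational $\mu$ satisfying the stated hypothesis, an explicit sequence of matchings and coalitions witnessing $\mu \gg \mu'$. The idea is that each match of $\mu$ should be established in a single step by the relevant blocking coalition, while the matches of $\mu'$ that are ``in the way'' get dismantled step by step. First I would fix notation: list the pairs $(s,h)$ with $\mu(s)=h$, say these are $(s^1,h^1),\dots,(s^p,h^p)$, and build the path so that after step $k$ the matching agrees with $\mu$ on $\{s^1,\dots,s^k\}$ and their hospitals. At each stage, either a hospital $h^k$ (with its couple) or a couple forms the match prescribed by $\mu$, possibly after first destroying obstructing matches inherited from $\mu'$. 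The destruction substeps are carried out by singleton coalitions $\{h\}$ or by a couple breaking one of its matches, and the enforcement conditions in Definition \ref{Def:Enfor} are immediate for such moves.

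The crux is verifying condition (2) of Definition \ref{def:indirect_dom}: every agent who acts at any step must strictly prefer the terminal matching $\mu$ to the matching present when it acts. For the agents who form the $\mu$-matches this is where the hypothesis enters. Suppose at some step the coalition forming $(s^k,h^k)$ consists of $h^k$ and the couple $c \ni s^k$ (the two-sided case); for this to be legitimate we need $h^k$ to strictly prefer $\mu$ to the current matching and the couple to strictly prefer $\mu$ to the current matching. The danger is exactly an agent who, at $\mu'$, is matched to something better than $\mu$ — but the hypothesis says there is no hospital $h$ and couple $c$ with $\mu'(h)\in c$ such that both $c$ and $h$ prefer $\mu'$ to $\mu$, which rules out precisely the configuration that would block a two-sided move; and individual rationality of $\mu$ rules out the one-sided obstructions (an agent refusing to give up a $\mu'$-match that is unacceptable under $\mu$ cannot arise because $\mu$ has no such unacceptable matches). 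I would therefore argue that for each $\mu$-match $(s^k,h^k)$ we can choose the coalition (two-sided with $h^k$, or one-sided involving only the couple if $h^k$ already prefers its current mate) so that all its members strictly prefer $\mu$; the hypothesis guarantees at least one such choice is available for every $k$.

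For the destruction substeps, the acting agent is a hospital breaking a match $\mu'(h)=s$ with $s\notin\mu^{-1}(h)$, or a member of a couple severing a link; since $\mu$ is individually rational and the match being destroyed is not a $\mu$-match, the acting agent weakly — and with the right bookkeeping strictly — prefers $\mu$ to the current matching, or at worst we can reorganize the order of moves so that destructions are always performed by an agent who genuinely prefers $\mu$ (for instance, let the destruction be done as a byproduct of the $+ (s,h)$ operation by the incoming coalition rather than as a standalone step). I would structure the construction so that standalone destructions are only ever needed to free an agent that $\mu$ leaves unmatched or rematched, and such an agent prefers $\mu$ by individual rationality together with the hypothesis. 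The main obstacle I anticipate is the combinatorial bookkeeping of the ordering: one must ensure that when the coalition for $(s^k,h^k)$ acts, both $s^k$'s couple and $h^k$ are simultaneously ``available'' and both strictly prefer $\mu$ — this requires interleaving destructions and constructions carefully and handling the subcase where a couple's two hospitals under $\mu$ must be acquired in a coordinated way. Once the ordering is pinned down, checking the enforceability (Definition \ref{Def:Enfor}) and the strict-preference condition reduces to the case analysis sketched above, and invoking Lemma \ref{lemma:nec} is not needed here — this is the converse direction, so the argument is purely constructive.
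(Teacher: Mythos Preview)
Your approach is workable in spirit but both substantially different from and more complicated than the paper's, and as written it leaves the hard part undone. The paper does not build $\mu$ match-by-match. Instead it uses essentially a two-step path: let $B(\mu',\mu)$ be the set of hospitals and couples strictly preferring $\mu$ to $\mu'$; set $\mu_0=\mu'$, $\mu_1=\mu'$ with every $\mu'$-match involving a member of $B(\mu',\mu)$ severed (enforced by $T_0=B(\mu',\mu)$), and $\mu_2=\mu$ (enforced by $T_1$ consisting of $B(\mu',\mu)$ together with the non-indifferent agents whose $\mu'$-partner was in $B(\mu',\mu)$). The hypothesis is used exactly once, to show that every $\mu'$-match that must change has at least one side in $B(\mu',\mu)$; hence after the mass severing in step~1 all agents whose assignment differs between $\mu'$ and $\mu$ are unmatched in $\mu_1$. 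Individual rationality of $\mu$ then gives strict preference of $\mu_2$ over $\mu_1$ for every member of $T_1$ in one stroke, with a short case analysis for couples one of whose members keeps the same hospital.

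Your plan, by contrast, interleaves destructions and constructions and therefore produces hybrid intermediate matchings that are neither $\mu'$ nor $\mu$. The hypothesis compares only $\mu'$ with $\mu$, so it tells you nothing directly about whether a couple (or hospital) strictly prefers $\mu$ to such a hybrid state---for instance, a couple one of whose members has already been moved to its $\mu$-hospital while the other is still at its $\mu'$-hospital or has been displaced. You flag this yourself (``the main obstacle \ldots is the combinatorial bookkeeping of the ordering'') but do not resolve it; ``once the ordering is pinned down'' is precisely the step that carries the weight. The cleanest way to close the gap is to abandon the one-match-at-a-time scheme and perform all severings by $B(\mu',\mu)$ first, which collapses to the paper's argument and eliminates the hybrid-state problem entirely.
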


\begin{proof}
Let $B(\mu',\mu)$  be the set of hospitals and couples that strictly prefer $\mu$ to $\mu'$ and let $I(\mu',\mu)$ be the set of couples and hospitals who are indifferent between $\mu$ and $\mu'$. We prove the lemma showing that if the above condition is satisfied then $\mu \gg \mu'$ because we can construct a sequence of matchings and blocking coalitions starting at $\mu'$ and leading to $\mu$ that satisfies Definition \ref{def:indirect_dom}.\footnote{Our proof follows the proof of Lemma 1 from Mauleon, Vannetelbosch and Vergote (2011).}

Construct the following sequence of matchings from $\mu'$ to $\mu$: $\mu_0=\mu'$, $\mu_1=\mu'-B(\mu',\mu)$, and $\mu_2=\mu$. Consider the following sequence of coalitions $T_0=B(\mu',\mu)$ and $T_1=B(\mu',\mu)\cup\lbrace i\in H\cup C\setminus I(\mu',\mu)| \mu'(i)\subseteq B(\mu',\mu)\rbrace$. Notice that coalition $T_0$ can enforce $\mu_1$ over $\mu_0$ and, by definition of $B(\mu',\mu)$, we have that $\mu_2\succ \mu_0$ for $T_0$.

We show next that each hospital and student of $T_1$ (except the agents having the same matching in $\mu$ and $\mu'$)  are unmatched in $\mu_1$ and since $\mu=\mu_2$ is individually rational, they prefer $\mu_2$ to $\mu_1$. 

Given the above condition, we know that if one member of the couple $c$ was unemployed in $\mu'$ but not in $\mu$, either the couple or the hospital matched with the other member of the couple would belong to $B(\mu',\mu)$. Then, both the couple $c$ and the hospitals in $ \mu(c)$ will all be unmatched in $\mu_1$. Given the condition of Lemma \ref{lemma:suff}, it holds that if $c=(s_1,s_2)\in C$ and $ (h_1,h_2)\in \mathcal{H}$ are such that $\mu'(c)=(h_1,h_2)\neq\mu(c)$, either (i) $\mu'\succ_c\mu$ and $\mu\succeq_{h_1,h_2}\mu'$, or (ii) $\mu\succ_{c}\mu'$. In case (ii), since $c\in B(\mu',\mu)$, $h_1$ and $h_2$ will also be unmatched in $\mu_1$. In case (i), without loss of generality, $h_1\in B(\mu',\mu)$ and either $h_2\in B(\mu',\mu)$ or $\mu'(h_2)=\mu(h_2)=s_2$.\footnote{The proof would be the same if we took $h_2\in B(\mu',\mu)$ and either $h_1\in B(\mu',\mu)$ or $\mu'(h_1)=\mu(h_1)$.} If $h_2\in B(\mu',\mu)$, $c,h_1$ and $h_2$ will be unmatched in $\mu_1$ and since $\mu$ is individually rational it holds that they all strictly prefer $\mu_2$ to $\mu_1$. If $\mu'(h_2)=\mu(h_2)=s_2$, $h_1$ and $s_1$ will still be unmatched in $\mu_1$. In addition, if $\mu(s_1)\neq u$, since $\mu $ is individually rational, it holds that $(\mu(s_1),h_2)\succ_c (\emptyset,h_2)$. Otherwise, if the couple prefers that $s_1$ becomes unemployed, $c$ would be a one-sided blocking coalition of $\mu$.\footnote{If $\mu(s_1)=u$, $\mu(c)$ will already be formed in $\mu_1$.}  Thus, in all of these cases, $c$ and $h_1$ (and $h_2$ if it is not indifferent between $\mu$ and $\mu'$) strictly prefer $\mu_2$ to $\mu_1$. 

Furthermore, since all the agents who changed their matching between $\mu'$ and $\mu$ are either included in $B(\mu',\mu)$ or matched in $\mu'$ with a member of $B(\mu',\mu)$ (and then unmatched in $\mu_1$), we have that $\mu_2$ is enforceable from $\mu_1$ by $T_1$. Hence, $\mu\gg \mu'$. %{\lbrace (i,j,k)|i\in B(\mu',\mu)\cap C$ and $j,k\in \mu (i)\rbrace +\lbrace (i,j,k)|j,k\in B(\mu',\mu)\cap H$ and $i= \mu (j,k)\rbrace$ 
\end{proof}

%\begin{remark}
%This is a sufficient condition for indirect dominance. However, i have seen in the next example that the reciprocal is not true. It would be interesting to see what is the maximal set for indirect dominance to occur.
%\end{remark}

\begin{aremark}
Lemma \ref{lemma:suff} is a sufficient condition for indirect dominance. Lemma \ref{lemma:nec} is a necessary condition for indirect dominance that is less restrictive. Under the condition of Lemma \ref{lemma:suff}, there does not exist any two-sided blocking coalition of $\mu$ matched in $\mu'$. Moreover, since $\mu$ is individually rational, there exists no one-sided blocking coalition of $\mu$ matched at $\mu'$. However, even if there exists no blocking coalition of $\mu$ matched at $\mu'$, there could still exist $c\in C$ and $h_1,h_2\in H$ such that $\mu'(c)=(h_1,h_2)$ with $\mu'\succ_{c,h_1} \mu$ and $\mu\succ_{h_2} \mu'$. It is for this particular case that we must extend the proof of Lemma \ref{lemma:suff} to have a condition for indirect dominance that is necessary and sufficient.
\end{aremark}

\begin{aproposition}
\label{prop:suffnec}
Consider any two matchings $\mu,\mu'\in\mathcal{M}$ such that $\mu$ is individually rational. Then $\mu \gg \mu'$ if and only if there does not exist any blocking coalition of $\mu$ matched in $\mu'$.
\end{aproposition}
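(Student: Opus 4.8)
The plan is to prove both directions, with the necessity direction following immediately from Lemma~\ref{lemma:nec} (if $\mu\gg\mu'$ and $\mu'$ contained a blocking coalition of $\mu$, that would contradict Lemma~\ref{lemma:nec}). The substantive work is the sufficiency direction: assuming there is no blocking coalition of $\mu$ matched in $\mu'$, construct a sequence of matchings and enforcing coalitions witnessing $\mu\gg\mu'$. By the Remark, the only case not already covered by Lemma~\ref{lemma:suff} is when there exist $c=(s_1,s_2)\in C$ and $h_1,h_2\in H$ with $\mu'(c)=(h_1,h_2)\neq\mu(c)$ such that $\mu'\succ_{c}\mu$ and $\mu'\succ_{h_1}\mu$ but $\mu\succ_{h_2}\mu'$ (and symmetric variants). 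So the plan is to handle this extra configuration by inserting one more intermediate step into the Lemma~\ref{lemma:suff} construction.

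First I would reuse the sets $B(\mu',\mu)$ (agents strictly preferring $\mu$ to $\mu'$) and $I(\mu',\mu)$ (agents indifferent), and note that the hypothesis ``no blocking coalition of $\mu$ matched in $\mu'$'' tells us precisely: for every couple $c$ with $\mu'(c)=(h_1,h_2)\neq\mu(c)$, it is \emph{not} the case that $c$, $h_1$, $h_2$ all weakly prefer $\mu'$ to $\mu$ with at least $c$ strict (otherwise we would have a two-sided block of $\mu$ sitting inside $\mu'$), and similarly no one-sided block of $\mu$ is realized in $\mu'$ (automatic since $\mu$ is individually rational). Hence for each such couple either (a) some agent among $\{c,h_1,h_2\}$ strictly prefers $\mu$ to $\mu'$, i.e.\ lies in $B(\mu',\mu)$, or (b) the remaining ``bad'' case where $c$ and, say, $h_1$ strictly prefer $\mu'$ while $h_2$ strictly prefers $\mu$ --- but then $h_2\in B(\mu',\mu)$ anyway. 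The key realization is that in every case at least one hospital among $\{h_1,h_2\}$, or the couple itself, belongs to $B(\mu',\mu)$, so after the first step (where $B(\mu',\mu)$ destroys all its matches, i.e.\ $\mu_1=\mu'-B(\mu',\mu)$) at least one of $h_1,h_2$ and possibly $s_1$ or $s_2$ becomes unmatched.

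The construction I would propose: $\mu_0=\mu'$; $\mu_1=\mu'-B(\mu',\mu)$ enforced by $T_0=B(\mu',\mu)$; then an \emph{extra} step $\mu_2$ in which, for every couple $c$ in the ``bad'' case (say $\mu'(c)=(h_1,h_2)$, $\mu'\succ_c\mu$, $\mu'\succ_{h_1}\mu$, $\mu\succ_{h_2}\mu'$ so $h_2\in B(\mu',\mu)$ and $h_2$ is unmatched in $\mu_1$), the agent $s_1$ who was matched to the still-occupied $h_1$ destroys that match --- enforced by coalition $T_1$ consisting of those couples (each couple can unilaterally unmatch one partner, and since $\mu$ is individually rational they strictly prefer $\mu$, hence will want to be part of building $\mu$). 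Concretely $T_1$ is the set of such couples $c$, and $\mu_2=\mu_1$ with those extra $(s_1,h_1)$ matches destroyed; we need that each such $c$ strictly prefers the final $\mu$ to $\mu_1$, which holds because in $\mu_1$ the couple is matched to $(h_1,h_2)$ or worse-off combinations... here I must be careful: at $\mu_1$ the couple $c$ is matched to $(h_1,u)$ (since $h_2$ got unmatched but $h_1$ didn't), and by individual rationality of $\mu$ together with the ``no one-sided block'' reasoning used in Lemma~\ref{lemma:suff} (if $c$ preferred keeping $s_1$ unmatched to its assignment in $\mu$, $c$ would one-sided-block $\mu$), $c$ strictly prefers $\mu$ to $(h_1,u)$; so $c$ is willing to act. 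Finally $\mu_3=\mu$, enforced by $T_2=B(\mu',\mu)\cup\{\text{all agents who changed and are now unmatched in }\mu_2\}\setminus I(\mu',\mu)$, exactly as in Lemma~\ref{lemma:suff}, and by individual rationality of $\mu$ every member of $T_2$ strictly prefers $\mu=\mu_3$ to their unmatched status in $\mu_2$, while enforceability holds because every new match in $\mu$ not present in $\mu_2$ is between agents both in $T_2$.

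The main obstacle I anticipate is bookkeeping in this extra step: making sure that when several ``bad'' couples share a hospital, or when a hospital $h_1$ that must be vacated in the bad case is simultaneously needed (already correctly matched) for a different couple, no conflict arises --- essentially verifying that $\mu(h_1)\ne s_1$ in the bad case (which follows from $\mu'(c)\ne\mu(c)$ combined with $\mu'\succ_{h_1}\mu$ forcing $\mu(h_1)\ne s_1$, since $s_1=\mu'(h_1)$ and $h_1$ strictly prefers $s_1$) so $h_1$ genuinely holds a different, soon-to-be-displaced partner, and checking this partner is itself handled (either in $B$, or will be freed). Once one confirms that destroying these finitely many extra matches leaves us in a position from which $\mu$ is enforceable by a coalition all of whose members strictly prefer $\mu$ (guaranteed by individual rationality of $\mu$), the argument closes exactly as in Lemma~\ref{lemma:suff}. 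I would also remark that the symmetric case (roles of $s_1,s_2$ and $h_1,h_2$ swapped) is identical, and that couples falling simultaneously into multiple cases are handled by taking the union of the relevant destructions in the single extra step.
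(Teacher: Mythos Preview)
Your overall architecture matches the paper's proof: necessity via Lemma~\ref{lemma:nec}, sufficiency by inserting one extra destruction step into the Lemma~\ref{lemma:suff} construction to handle the configuration $\mu'(c)=(h_1,h_2)$, $\mu'\succ_c\mu$, $\mu'\succ_{h_1}\mu$, $\mu\succ_{h_2}\mu'$. The sequence $\mu_0=\mu'$, $\mu_1=\mu'-B(\mu',\mu)$, an intermediate $\mu_2$ breaking the surviving $(s_1,h_1)$ links, then $\mu_3=\mu$, is exactly what the paper does.

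There is, however, a genuine gap in your justification of the crucial step. You need $\mu(c)\succ_c(h_1,u)=\mu_1(c)$ so that $c$ is willing to destroy $(s_1,h_1)$. You argue this ``by individual rationality of $\mu$ together with the no one-sided block reasoning (if $c$ preferred keeping $s_1$ unmatched to its assignment in $\mu$, $c$ would one-sided-block $\mu$).'' But individual rationality of $\mu$ only tells you $\mu(c)\succeq_c(u,\mu(s_2))$ and $\mu(c)\succeq_c(\mu(s_1),u)$; it says nothing about $(h_1,u)$ when $h_1\neq\mu(s_1)$, which is precisely the situation here (since $\mu'\succ_{h_1}\mu$ forces $\mu(h_1)\neq s_1$). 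So your stated reason does not deliver the inequality you need.

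The correct justification---and the one the paper uses---invokes the \emph{hypothesis of the proposition}, not individual rationality. In the bad case we already have $s_1=\mu'(h_1)\succ_{h_1}\mu(h_1)$. If in addition $(h_1,u)\succeq_c\mu(c)$ (i.e.\ $c$ does \emph{not} prefer $\mu$ at $\mu_1$), then $\{c,h_1\}$ constitutes a blocking coalition of $\mu$ (they can enforce $(h_1,u)$ from $\mu$ and both strictly prefer it), and this coalition is matched in $\mu'$ since $\mu'(s_1)=h_1$. That contradicts the hypothesis ``no blocking coalition of $\mu$ is matched in $\mu'$.'' Hence $\mu(c)\succ_c(h_1,u)$, and $c$ is willing to act. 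Once you swap in this argument, the rest of your construction goes through exactly as written.
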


\begin{proof}
The ``\emph{only if}'' part follows from Lemma \ref{lemma:nec}. 

Lemma \ref{lemma:suff} guarantees the ``\emph{if}'' part for the case where there exists no $h\in H$ and $c\in C$ with $\mu' (h)\in c$, such that both $c$ and $h$ strictly prefer $\mu'$ to $\mu$. We have then to extend the proof of Lemma \ref{lemma:suff} to the case where there exist $c\in C$, $h_1,h_2\in H$ such that $\mu'(c)=(h_1,h_2)$, $\mu'\succ_{c,h_1}\mu$ and $\mu\succ_{h_2}\mu'$.

Since $h_2 \in B(\mu',\mu)$, $h_2$ will join the first coalition $T_0$ that initiate the sequence of matchings leading from $\mu'$ to $\mu$. After the deviation of $T_0$ from $\mu'=\mu_0$ to $\mu_1$, we have $\mu_1(c)=(h_1,u)$. At $\mu_1$, it must hold that $\mu(c)\succ_c \mu_1(c)$ and/or $ \mu(h_1) \succ_{h_1} \mu_1(h_1) $. Otherwise, $\{c,h_1\}$ will form a blocking coalition of $\mu$ matched at $\mu'$, contradicting the hypothesis of non-existence of blocking coalitions of $\mu$ matched at $\mu'$. Hence, from $\mu_1$, either $c$, or $h_1$, or both, will become unmatched moving to $\mu_2$. Finally, from $\mu_2$, since all the agents who changed their matching between $\mu'$ and $\mu$ are either included in $B(\mu',\mu)$, or included in  $B(\mu_1,\mu)$, or matched in $\mu'$ with a member of $B(\mu',\mu)$ (and then unmatched in $\mu_1$ or in $\mu_2$), we have that $\mu_3 = \mu$ is enforceable from $\mu_2$ by $T_2=B(\mu',\mu)\cup B(\mu_1,\mu)\cup\lbrace i\in H\cup C\setminus I(\mu',\mu)| \mu'(i)\subseteq B(\mu',\mu)\rbrace$ and such that $\mu_3 \succ_{T_2} \mu_2$. Hence, $\mu\gg \mu'$.
\end{proof}

%\begin{remark}
%Intuition : All stable matchings $\mu$ are such that $\lbrace\mu\rbrace$ is a vNM farsighted stable set if for any couple c, there exists no matching $\mu'$ and no hospital $h\in \mu'(c)$  where $\mu'-h\succ_c \mu$.
%\end{remark}

Proposition \ref{prop:suffnec} provides a characterization of indirect dominance. Hence, since a stable matching does not admit any blocking coalition, we can derive the following characterization of singleton farsighted stable sets.

\begin{atheorem}
\label{thm:characterization_singleton}
A singleton $V=\lbrace\mu\rbrace$ is a farsighted stable set if and only if $\mu$ is a stable matching. 
\end{atheorem}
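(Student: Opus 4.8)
The plan is to reduce the statement to external stability and then read both implications off the characterization of indirect dominance already established: Proposition \ref{prop:suffnec} for the ``if'' part, and the necessary condition of Lemma \ref{lemma:nec} for the ``only if'' part. First I would note that internal stability is trivially satisfied by any singleton set (there is no pair of distinct matchings inside $V=\{\mu\}$ to check), exactly as in the Example above, so that a singleton $\{\mu\}$ is a farsighted stable set if and only if it satisfies external stability, i.e.\ $\mu\gg\mu'$ for every $\mu'\neq\mu$.

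For the ``if'' direction, suppose $\mu$ is stable. Then $\mu$ is individually rational and admits no blocking coalition whatsoever; in particular, for every matching $\mu'\neq\mu$ there is no blocking coalition of $\mu$ matched in $\mu'$. Proposition \ref{prop:suffnec} then yields $\mu\gg\mu'$ for all $\mu'\notin\{\mu\}$, which is precisely external stability of $V=\{\mu\}$. Hence $\{\mu\}$ is a farsighted stable set.

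For the ``only if'' direction I would argue by contraposition: assume $\mu$ is not stable and show $\{\mu\}$ fails external stability. Being not stable, $\mu$ admits a blocking coalition $T$ of one of the types listed in Section \ref{sec:model}. In each case I would write down, using Definition \ref{Def:Enfor}, the matching $\mu'$ obtained from $\mu$ by implementing the blocking arrangement of $T$ and letting the necessarily displaced partners become unmatched/unfilled: this $\mu'$ is a legitimate matching that contains the blocking coalition $T$ of $\mu$, and it differs from $\mu$ because the members of $T$ are assigned differently (they strictly prefer their assignment at $\mu'$). By Lemma \ref{lemma:nec} in contrapositive form, $\mu\not\gg\mu'$; since $\mu'\notin V$, this contradicts external stability, so $\{\mu\}$ is not a farsighted stable set.

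The only mildly delicate part is the bookkeeping in the ``only if'' direction, namely running through the four cases (a hospital matched to an unacceptable student; a couple wishing to drop one or both partners; a two-sided coalition of a couple with two hospitals; a two-sided coalition of a couple with one hospital), and in each case verifying both that the exhibited $\mu'$ satisfies the matching conditions (i)--(iv) and that $\mu'\neq\mu$. The latter is always immediate from the strict-preference inequality defining the blocking coalition (for instance $\mu(s_2)\neq u$ whenever $(\mu(s_1),u)\succ_c\mu(s_1,s_2)$, so that $\mu-(s_2,\mu(s_2))\neq\mu$). There is no deeper obstacle, since the substantive work has already been carried out in Lemma \ref{lemma:suff} and Proposition \ref{prop:suffnec}.
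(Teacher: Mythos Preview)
Your proposal is correct and follows essentially the same approach as the paper: both directions reduce to external stability and then invoke Proposition~\ref{prop:suffnec} for the ``if'' part and Lemma~\ref{lemma:nec} for the ``only if'' part. The only difference is cosmetic: in the ``only if'' direction the paper simply observes that external stability plus Lemma~\ref{lemma:nec} force \emph{every} $\mu'\in\mathcal{M}$ to be free of blocking coalitions of $\mu$ (hence $\mu$ is stable), whereas you argue by contraposition and explicitly construct one particular $\mu'$ containing the blocking coalition---this extra bookkeeping is sound but not needed.
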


\begin{proof}
Since $\mu$ is a stable matching, there exists no blocking coalitions of $\mu$ in any  $\mu'\in\mathcal{M}$. By Proposition \ref{prop:suffnec}, this condition is satisfied if and only if $\mu\gg\mu'$ for any $\mu'\in\mathcal{M}$, which is the definition of a singleton farsighted stable set.

Let $\lbrace\mu\rbrace$ be a farsighted stable set. By the external stability condition of Definition \ref{Def:farsightedstableset} and by Lemma \ref{lemma:nec}, there exists no blocking coalition at $\mu$ matched in any $\mu'\in \mathcal{M}$. Hence, $\mu$ is a stable matching. 
\end{proof}

%As Klaus and Klijn (2005) showed that the domain of weakly responsive preferences is a maximal domain for the existence of stable matchings, we can transpose this result to the existence of singleton vNM farsightedly  stable sets.

%\begin{corollary}
 %With strictly unemployment averse couples, the maximal domain for the existence of singleton vNM farsightedly stable set is weak responsiveness.
 %\end{corollary}

Theorem \ref{thm:characterization_singleton} characterizes singleton farsighted stable sets: a singleton matching is a farsighted stable set if and only if the matching is stable. We next investigate the existence of non-singleton farsighted stable sets. In marriage markets, Mauleon, Vannetelbosch and Wergote (2011) showed that the only farsighted stable sets were the singletons consisting of a stable matching. In roommate markets, Klaus, Klijn, Walzl (2009) showed that every stable matching is a singleton farsighted stable set and that there could exist farsighted stable sets with more than one element for couples markets without stable matchings. They left the question open for markets with stable matchings. We show that farsighted stable sets with several elements can exist both for couples markets with and without stable matchings. First, using Lemma \ref{lemma:nec}, we can show that no farsighted stable set can be composed of exactly two elements.

\begin{alemma}
\label{lem:no_stableset_card_two}
There does not exist a farsighted stable set of cardinality two.
\end{alemma}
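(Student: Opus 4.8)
\emph{Setup and strategy.} The plan is to assume for contradiction that $V=\lbrace\mu_1,\mu_2\rbrace$ with $\mu_1\neq\mu_2$ is a farsighted stable set, and to exhibit a matching $\nu\notin V$ with $\mu_1\not\gg\nu$ and $\mu_2\not\gg\nu$, contradicting external stability. The workhorse throughout is Lemma \ref{lemma:nec} in contrapositive form: if $\nu$ contains a blocking coalition of $\mu_i$ matched in $\nu$, then $\mu_i\not\gg\nu$.

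\emph{Step 1: $\mu_1$ and $\mu_2$ are individually rational.} Suppose $\mu_1$ is not. Then $\mu_1$ has a one-sided blocking coalition $T$ (a hospital with an unacceptable mate, or a couple wishing to unmatch one or both members); let $\rho$ be the matching produced by this single deviation. Then $\rho\neq\mu_1$, $\rho>\mu_1$ (hence $\rho\gg\mu_1$), and $T$ is a blocking coalition of $\mu_1$ matched in $\rho$, so $\mu_1\not\gg\rho$. If $\rho\in V$ then $\rho=\mu_2$ and $\mu_2\gg\mu_1$, contradicting internal stability; so $\rho\notin V$, and external stability yields $\mu_2\gg\rho$. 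A second application of Lemma \ref{lemma:nec} shows $T$ is \emph{not} a blocking coalition of $\mu_2$ matched in $\rho$; since $T$ still occupies its detached configuration at $\rho$, this forces every member of $T$ to (weakly) prefer $\mu_2$ to $\rho$, hence to strictly prefer $\mu_2$ to $\mu_1$. We may then prepend the deviation $\mu_1\to\rho$ (performed by $T$) to a sequence witnessing $\mu_2\gg\rho$: the resulting sequence has terminal matching $\mu_2$, which every member of $T$ strictly prefers to $\mu_1$, so it witnesses $\mu_2\gg\mu_1$ — again contradicting internal stability. Hence $\mu_1$, and symmetrically $\mu_2$, is individually rational.

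\emph{Step 2: splicing blocking coalitions.} Since $\mu_1$ is individually rational and $\mu_1\not\gg\mu_2$ (internal stability, $\mu_1\neq\mu_2$), Proposition \ref{prop:suffnec} yields a blocking coalition $T_1$ of $\mu_1$ matched in $\mu_2$; by individual rationality it is two-sided, hence involves a single couple $c$, with $\mu_2(c)$ the hospital-pair that $T_1$ demands. Symmetrically there is a two-sided blocking coalition $T_2$ of $\mu_2$ matched in $\mu_1$, with couple $d$. The coalitions are disjoint: a common couple would have to strictly prefer $\mu_2$ to $\mu_1$ (from $T_1$) and $\mu_1$ to $\mu_2$ (from $T_2$); a common hospital $h$ would satisfy $\mu_2(h)\succeq_h\mu_1(h)$ and $\mu_1(h)\succeq_h\mu_2(h)$, so $\mu_1(h)=\mu_2(h)$, but this mate would belong both to $c$ (as $h$ is a hospital of $T_1$ matched in $\mu_2$) and to $d$, forcing $c=d$, impossible by the first case. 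Define $\nu$ by giving $c$ its $\mu_2$-mates, $d$ its $\mu_1$-mates, and leaving every other agent unmatched; the disjointness of $T_1,T_2$ together with the fact that a hospital has a unique mate in each matching guarantees these prescriptions never collide, so $\nu$ is a well-defined matching. By construction $T_1$ is a blocking coalition of $\mu_1$ matched in $\nu$ and $T_2$ one of $\mu_2$ matched in $\nu$, so $\mu_1\not\gg\nu$ and $\mu_2\not\gg\nu$, whence $\nu\in V$ by external stability. But $\nu(c)=\mu_2(c)\succ_c\mu_1(c)$ gives $\nu\neq\mu_1$, and symmetrically $\nu\neq\mu_2$ — the desired contradiction.

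\emph{Main obstacle.} The step I expect to demand the most care is the construction of $\nu$ in Step 2: one must verify that assigning $c$ its $\mu_2$-mates and $d$ its $\mu_1$-mates never requires a single hospital to take two distinct students. This reduces, via $T_1\cap T_2=\emptyset$, to a short case analysis on whether each $T_i$ is a three-agent or a two-agent blocking coalition, with every potential collision forcing two members of the distinct couples $c$ and $d$ to coincide. A second delicate point is the prepending in Step 1: one must confirm that the agents of $T$ strictly prefer the \emph{final} matching $\mu_2$ — not merely their intermediate detached position — to $\mu_1$, which is precisely what the second use of Lemma \ref{lemma:nec} supplies (and which requires inspecting the three possible forms of a one-sided couple deviation).
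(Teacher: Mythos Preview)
Your Step 1 follows the paper's argument essentially verbatim: the paper also argues that if $\mu_1$ fails individual rationality, then $\mu_1$ cannot indirectly dominate the matching obtained by satisfying a one-sided block, so $\mu_2$ must, and ``in this case $\mu_2$ would also indirectly dominate $\mu_1$'' --- i.e.\ exactly your prepending move. (Both versions are somewhat brisk in the sub-case where a couple $c$ detaches only one partner and keeps the other at a hospital $h\neq\mu_2(s_1)$: there the fact that $\{c\}$ is not a one-sided blocking coalition of $\mu_2$ \emph{matched in $\rho$} is vacuous and does not by itself force $\mu_2\succ_c\mu_1$; one has to use that otherwise $h$ must be the first to sever $(s_1,h)$ along the $\rho\to\mu_2$ path. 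This is a shared imprecision, not something peculiar to your write-up.)

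Step 2 is where you genuinely diverge from the paper. The paper lets \emph{all} two-sided blocking coalitions of $\mu_1$ that are matched in $\mu_2$ deviate simultaneously, producing a matching $\mu_1'$; then $\mu_1\not\gg\mu_1'$, external stability gives $\mu_2\gg\mu_1'$, and the paper argues that any two-sided block of $\mu_2$ matched in $\mu_1$ would persist in $\mu_1'$ and obstruct $\mu_2\gg\mu_1'$ --- hence none exists, so Proposition~\ref{prop:suffnec} yields $\mu_2\gg\mu_1$, contradicting internal stability. You instead pick a \emph{single} block $T_1$ of $\mu_1$ realized in $\mu_2$ and a single block $T_2$ of $\mu_2$ realized in $\mu_1$, establish $T_1\cap T_2=\emptyset$, and splice the two into one matching $\nu$ that carries a block of $\mu_1$ and a block of $\mu_2$ simultaneously; Lemma~\ref{lemma:nec} then kills external stability directly. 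Your route is more constructive and sidesteps the point the paper leaves implicit --- namely that a block of $\mu_2$ matched in $\mu_1$ is untouched by the passage to $\mu_1'$, which requires exactly the kind of non-interference between $T_1$-type and $T_2$-type coalitions that your disjointness argument makes explicit. The price you pay is the well-definedness check for $\nu$, which (as you note) is a short case analysis and goes through.
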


\begin{proof}
Suppose on the contrary that there exists a farsighted stable $V=\{ \mu_1,\mu_2\}$, with $\mu_1 \neq \mu_2$. Notice first that $\mu_1$ and $\mu_2$ should not be stable. Otherwise, since a stable matching indirectly dominates any other matching, the internal stability condition of Definition \ref{Def:farsightedstableset} would be violated. Second, both $\mu_1$ and $\mu_2$ should be individually rational. Otherwise, if $\mu_1$ is not individually rational, it would not indirectly dominates the matchings respecting one-sided blocking coalitions of $\mu_1$. In order to satisfy the external stability condition of Definition \ref{Def:farsightedstableset}, $\mu_2$ must then indirectly dominates the  matchings respecting one-sided blocking coalitions of $\mu_1$. But in this case $\mu_2$ would also indirectly dominates $\mu_1$ violating the internal stability condition of Definition \ref{Def:farsightedstableset}.

Since both $\mu_1$ and $\mu_2$ are unstable individually rational matchings belonging to $V=\{ \mu_1,\mu_2\}$, we can construct a matching $\mu_1'$ from $\mu_1$ enforced by the two-sided blocking coalitions of $\mu_1$ matched in $\mu_2$. Otherwise, if there does not exist any blocking coalition of $\mu_1$ matched at $\mu_2$, $\mu_1\gg\mu_2$, violating the internal stability condition of Definition \ref{Def:farsightedstableset}.\footnote{Note that $\mu_1'$ is not necessarily individually rational.} By Lemma \ref{lemma:nec} and the construction of $\mu_1'$, we have that $\mu_1\not\gg \mu_1'$ because at least one blocking coalition of $\mu_1$ is matched in $\mu_1'$. By external stability of $V$, we have that $\mu_2\gg \mu_1'$. Then, there does not exist a two-sided blocking coalition of $\mu_2$ matched in $\mu_1$. Otherwise, the two-sided blocking coalition would also be matched in $\mu_1'$. Since the two-sided blocking coalition is worse off in $\mu_2$ compared to $\mu_1'$, their members would never join any deviation leading to $\mu_2$, contradicting the fact that $\mu_2\gg\mu_1'$. Thus, by Lemma \ref{lemma:nec}, this implies that $\mu_2\gg\mu_1$, which contradicts the internal stability of $V$. 
% until there exists no more blocking coalitions of $\mu_1$ blocking $\mu'$.  Using Lemma \ref{propsuffnec}, we have that $\mu_1\not\gg \mu'$.\\ By external stability of V, we know there exists $\mu_2\in V$ such that $\mu_2\gg\mu'$. From this, we have two cases: (1) there does not exists  $h\in H$ such that $\mu_1\succ_{\lbrace h,\mu_1(h)\rbrace}\mu_2$, and (2) such coalitions do exist.\\In case (1), we have that $\mu_2\gg\mu_1$ from Lemma \ref{propsuffnec} which contradicts the internal stability of V. In case (2), we can construct a matching $\mu''$ from $\mu$ such that it satisfies the blocking coalitions of $\mu_1$ matched together in $\mu_2$. We have seen in the first case that such coalitions should exist. In addition, we find that $\mu'$ dominates (or is equal to) $\mu''$  as it also satisfies the blocking coalitions of $\mu_1$ blocking $\mu''$.
\end{proof}

Next example shows that there might exist farsighted stable sets with more than one element for couples markets with stable matchings.

\begin{aexample}\label{ex:three-stable-set}
Consider a matching with couples problem with $H=\{h_1,h_2,h_3,h_4,h_5,h_6\}$ and $C=\{(s_1,s_2),(s_3,s_4),(s_5,s_6),(s_7,s_8),(s_9,s_{10})\}$. Hospitals and couples' preferences are as follows.

\begin{table}[!ht]
\centering
\begin{tabular}{cccccc}
\multicolumn{6}{c}{Hospitals}\\
\hline
$P_{h_{1}}$ & $P_{h_{2}}$ & $P_{h_{3}}$ & $P_{h_{4}}$ & $P_{h_{5}}$ & $P_{h_{6}}$ \\
\hline 
$s_4$ & $s_2$ & $s_5$ & $s_1$ & $s_8$ & $s_2$ \\
$s_8$ & $s_6$ & $s_7$ & $s_3$ & $s_6$ & $s_5$ \\
$s_1$ & $s_9$ & $s_3$ & $s_9$ & $s_7$ & $s_8$ \\
$s_{10}$ & & & & &
\end{tabular}
\begin{tabular}{ccccc}
\multicolumn{5}{c}{Couples}\\
\hline
$P_{(s_1, s_2)}$ & $P_{(s_3, s_4)}$ & $P_{(s_5, s_6)}$ & $P_{(s_7, s_8)}$ & $P_{(s_9, s_{10})}$ \\
\hline
$(h_1,h_2)$ & $(h_4,h_1)$ & $(h_6,h_5)$ & $(h_5,h_6)$ & $(h_2,u)$ \\ 
$(h_4,h_6)$ & $(h_3,u)$   & $(h_3,u)$   & $(h_3,h_5)$ & $(h_4,u)$ \\ 
            &             & $(u,h_2)$   & $(h_5,h_1)$ & $(u,h_1)$ \\ 
            & 			  &				&			  &			  
\end{tabular} 
\end{table}

There exists a unique stable matching $\mu=\{(s_1,h_4),(s_2,h_6),(s_3,u),(s_4,u),(s_5,h_3),\\(s_6,u),(s_7,h_5),(s_8,h_1),(s_9,h_2),(s_{10},u)\}$. Then, Theorem \ref{thm:characterization_singleton} implies that the singleton set $V=\{\mu\}$ is a farsighted stable set. Consider now the following three matchings:
\begin{equation*}
\begin{split}
	 \mu_1 &= \left\{(s_1,h_1), (s_2,h_2),(s_3,h_3),(s_4,u),(s_5,h_6),(s_6,h_5),(s_7,u),(s_8,u),(s_9,h_4),(s_{10},u)\right\}, \\
	 \mu_2 &= \{(s_1,h_4),(s_2,h_6),(s_3,u),(s_4,u),(s_5,u),(s_6,h_2),(s_7,h_3),(s_8,h_5),(s_9,u),(s_{10},h_1)\},\\
	 \mu_3 &= \{(s_1,u),(s_2,u),(s_3,h_4),(s_4,h_1),(s_5,h_3),(s_6,u),(s_7,h_5),(s_8,h_6),(s_9,h_2),(s_{10},u)\}.
\end{split}
\end{equation*}

We show now that the set $V'=\{\mu_1,\mu_2,\mu_3\}$ is another farsighted stable set.
 
The matching $\mu_2$ is blocked by the coalition $\{h_{1},h_{2},(s_{1},s_{2})\}$. Moreover, the couple $(s_{1},s_{2})$ and hospitals $h_{1}$ and $h_{2}$ are matched under the matching $\mu_{1}$. So, from Lemma \ref{lemma:nec}, $\mu_2 \not\gg \mu_1$. On the other hand, the matching $\mu_{1}$ is blocked by the coalition $\{h_{3},h_{5},(s_{7},s_{8})\}$. Moreover, the couple $(s_{7},s_{8})$ and hospitals $h_{3}$ and $h_{5}$ are matched under the matching $\mu_{2}$. Hence, from Lemma \ref{lemma:nec}, $\mu_1 \not\gg \mu_2$. 

The matching $\mu_3$ is blocked by the coalition $\{h_{5},h_{6},(s_{5},s_{6})\}$. Moreover, student $s_{5}$ and hospital $h_{6}$ and student $s_{6}$ and $h_5$ are matched under the matching $\mu_1$. So, from Lemma \ref{lemma:nec}, $\mu_3 \not\gg \mu_1$. On the other hand, the matching $\mu_{1}$ is blocked by the coalition $\{h_{4},h_{1},(s_{3},s_{4})\}$. Moreover, the couple $(s_{3},s_{4})$ and hospitals $h_{4}$ and $h_{1}$ are matched under the matching $\mu_{3}$. Hence, from Lemma \ref{lemma:nec}, $\mu_1 \not\gg \mu_3$. 

The matching $\mu_3$ is blocked by the coalition $\{h_{4},h_{6},(s_{1},s_{2})\}$. Moreover, the couple $(s_{1},s_{2})$ and hospitals $h_{4},h_{6}$ are matched under the matching $\mu_2$. So, from Lemma \ref{lemma:nec}, $\mu_3 \not\gg \mu_2$. On the other hand, the matching $\mu_{2}$ is blocked by the coalition $\{h_{3},(s_{5},s_{6})\}$. Moreover, the couple $(s_{5},s_{6})$ and hospitals $u,h_{3}$ are matched under the matching $\mu_{3}$. Hence, from Lemma \ref{lemma:nec}, $\mu_2 \not\gg \mu_3$.

Since there is no matching in the set $V'$ that indirectly dominates another matching in $V'$, the set $V'=\{\mu_1,\mu_2,\mu_3\}$ satisfies the internal stability condition of Definition \ref{Def:farsightedstableset}. 

In addition, the stable matching $\mu$ is indirectly dominated by $\mu_1$. The sequence of matchings $\mu^0=\mu$, $\mu^1=\mu^0 - \{h_5,h_2,(s_{1},s_{2}),(s_{5},s_{6})\} $, $\mu^2=\mu^1 - \{(s_{7},s_{8})\} $, $\mu^3=\mu^2 + \{h_1,h_2,h_3,h_4,h_5,h_6,(s_{1},s_{2}),(s_{3},s_{4}),(s_{5},s_{6}),(s_{9},s_{10})\} = \mu_1 $ are enforceable respectively by the coalitions $T^0=\{h_5,h_2,(s_{1},s_{2}),(s_{5},s_{6})\}$, $T^1=\{(s_{7},s_{8})\}$, $T^2=\{h_1,h_2,h_3,h_4,h_5,h_6,(s_{1},s_{2}),(s_{3},s_{4}),(s_{5},s_{6}),(s_{9},s_{10})\}$. Notice that, at each step, the members of $T^i$, $i=\{0,1,2\}$, are better off in $\mu_1$ than in $\mu^i$. Thus, $\mu_1\gg\mu$. It can easily be checked that all other matchings are indirectly dominated by some matching inside $V'$. Thus, $V'$ also satisfies the external stability condition in Definition \ref{Def:farsightedstableset}. Therefore, the set $V'=\{\mu_{1},\mu_{2},\mu_{3}\}$ is a farsighted stable set.
\end{aexample}

\section{Markets without stable matchings}
\label{sec:unsolvable}

In this section we study the farsighted stable set in couples markets without stable matchings. From Theorem \ref{thm:characterization_singleton}, we deduce that in these markets without stable matchings there is no singleton farsighted stable set. We provide two examples to show that the existence of farsighted stable sets is not guaranteed for couples markets without stable matchings. Next example provides a couples market without stable matchings but with a farsighted stable set containing several matchings.

\begin{aexample} 
Consider the couples market derived from Example \ref{ex:three-stable-set} by removing the acceptable pair of hospitals $(h_5,h_1)$ for the couple $(s_7,s_8)$ and the acceptable doctor $s_8$ for hospital $h_1$. Hospitals and couples' preferences are as follows.

\begin{table}[!ht]
\centering
\begin{tabular}{cccccc}
\multicolumn{6}{c}{Hospitals}\\
\hline
$P_{h_{1}}$ & $P_{h_{2}}$ & $P_{h_{3}}$ & $P_{h_{4}}$ & $P_{h_{5}}$ & $P_{h_{6}}$ \\
\hline 
$s_4$ & $s_2$ & $s_5$ & $s_1$ & $s_8$ & $s_2$ \\
$s_1$ & $s_6$ & $s_7$ & $s_3$ & $s_6$ & $s_5$ \\
$s_{10}$ & $s_9$ & $s_3$ & $s_9$ & $s_7$ & $s_8$
\end{tabular}
\begin{tabular}{ccccc}
\multicolumn{5}{c}{Couples}\\
\hline
$P_{(s_1, s_2)}$ & $P_{(s_3, s_4)}$ & $P_{(s_5, s_6)}$ & $P_{(s_7, s_8)}$ & $P_{(s_9, s_{10})}$ \\
\hline
$(h_1,h_2)$ & $(h_4,h_1)$ & $(h_6,h_5)$ & $(h_5,h_6)$ & $(h_2,u)$ \\ 
$(h_4,h_6)$ & $(h_3,u)$   & $(h_3,u)$   & $(h_3,h_5)$ & $(h_4,u)$ \\ 
            &             & $(u,h_2)$   & 		      & $(u,h_1)$ 	  
\end{tabular} 
\end{table}
Notice that there does not exist a stable matching. However, the set of matchings $V=\lbrace \mu_1,\mu_2,\mu_3\rbrace$ is still a farsighted stable set as we can easily check that it respects the internal and external stability conditions from Definition \ref{Def:farsightedstableset}.\footnote{This is because we have only removed some individually rational matchings from the preferences in Example 2 but the order of the preferences has not been modified.} This example shows that couples markets can exhibit a farsighted stable set containing unstable matchings, which is not the case for marriage markets. See Theorem 2 in Mauleon, Vannetelbosch and Vergote (2011).
\end{aexample}

Next example provides a couples market without a farsighted stable set.

\begin{aexample}[Roth, 2008]
Consider a couples market with $H=\{h_1,h_2\}$ and $C=\{(s_1,s_2),(s_3,s_4)\}$. Hospitals and couples' preferences are as follows.

\begin{table}[!ht]
\centering
\begin{tabular}{cc}
\multicolumn{2}{c}{Hospitals}\\
\hline
$P_{h_{1}}$ & $P_{h_{2}}$ \\
\hline 
$s_1$ & $s_1$ \\
$s_3$ & $s_3$ \\
$s_2$ & $s_2$ 
\end{tabular}
~\qquad~      
    \begin{tabular}{cc}
\multicolumn{2}{c}{Couples}\\
\hline
$P_{(s_1, s_2)}$& $P_{(s_3, s_4)}$ \\
\hline
$(h_1,h_2)$ & $(h_1,u)$ \\ 
            & $(h_2,u)$  \\ 
            & 
\end{tabular} 
\end{table}
Roth (2008) showed that there does not exist a stable matching in this couples market. One can see that there are three individually rational matchings: $\mu_1=\{(s_1,h_1),(s_2,h_2),(s_3,u),(s_4,u)\}$, $\mu_2=\{(s_1,u),(s_2,u),(s_3,h_2),(s_4,u)\}$, and $\mu_3=\{(s_1,u),(s_2,u),(s_3,h_1),(s_4,u)\}$. Notice that hospital $h_1$ and couple $(s_3,s_4)$ are better off at $\mu_3$ than at $\mu_2$, and hence $\mu_3>\mu_2$. Since hospital $h_2$ and couple $(s_3,s_4)$ are better off at $\mu_2$ than at $\mu_1$, it holds that $\mu_2>\mu_1$. Finally, since hospitals $h_1$, $h_{2}$ and couple $(s_1,s_2)$ are better off at $\mu_1$ than at $\mu_3$, we have $\mu_1>\mu_3$. Thus, there is no stable matching. Furthermore, since direct dominance implies indirect dominance, we have that $\mu_1\gg\mu_3\gg\mu_2\gg\mu_1$. Then, any set $V$ containing at least two matchings ($V\supseteq \{\mu_1,\mu_2\}$, $V\supseteq \{\mu_1,\mu_3\}$, and $V\supseteq \{\mu_2,\mu_3\}$) does not satisfy the internal stability condition from Definition \ref{Def:farsightedstableset} and is not a farsighted stable set. Moreover, any singleton set $V$ is not a farsighted stable set since $\mu_3$ does not indirectly dominate $\mu_1$, $\mu_1$ does not indirectly dominate $\mu_2$, and $\mu_2$ does not indirectly dominate $\mu_3$. Then, the external stability condition from Definition \ref{Def:farsightedstableset} would be violated.
\end{aexample}

Thus, there are couples markets where the farsighted stable set does not exist. When agents are farsighted, is there any solution concept that always provides some prediction? Herings, Mauleon and Vannetelbosch (2009, 2010) propose the DEM farsighted stable set. Replacing the internal stability condition in the definition of the farsighted stable set by deterrence of external deviations and minimality, leads to a stability concept, the DEM farsighted stable set, that is always non-empty.

\begin{adefinition}\label{DEMset} 
Let $\langle H, C, P \rangle$ be a couples market. A set of matchings $V\subseteq\mathcal{M}$ is a DEM farsighted stable set if it satisfies:
\begin{enumerate}[label=(\textbf{\roman*})]
\item[(D)] \textit{Deterrence of external deviations}: For any $\mu\in V$, and any $\mu'\notin V$ that can be enforced from $\mu$ by coalition $T\in H \cup C$, there exists $\mu'' \in V$ such that $\mu''\gg\mu'$ and $\mu''\nsucc_T \mu$.
\item[(E)] \textit{External stability}: For every $\mu\notin V$, there exists $\mu'\in V$ such that $\mu'\gg\mu$, 
\item[(M)] \textit{Minimality}: There does not exist $V'\subset V$ such that $V'$ satisfies both (D) and (E).
\end{enumerate}
\end{adefinition}

Condition (D) in Definition \ref{DEMset} requires the deterrence of external deviations. It captures that any deviation to a matching outside $V$, is deterred by the threat of ending in $\mu''$ that indirectly dominates $\mu'$. Moreover $\mu''$ belongs to $V$ which makes $\mu''$ a credible threat. Condition (E) in Definition \ref{DEMset} requires external stability and implies that the matchings within the set are robust to perturbations. Any matching outside of $V$ is indirectly dominated by a matching in the set. Condition (E) implies that if a set of matchings is a DEM farsighted stable set, it is non-empty. Notice that the set $\mathcal{M}$ (trivially) satisfies Conditions (D) and (E) in Definition \ref{DEMset}. This motivates the requirement of a minimality condition, namely Condition (M).

Contrary to farsighted stable sets, DEM farsighted stable sets always exist. Moreover, if $V$ is a farsighted stable set, then $V$ is a DEM farsighted stable set. See Herings, Mauleon and Vannetelbosch (2009, 2010). In Example 4 there is no farsighted stable set. However, the sets $V= \{\mu_1,\mu_2\}$, $V'= \{\mu_1,\mu_3\}$, and $V'' \{\mu_2,\mu_3\}$ are DEM farsighted stable sets. For instance, $V= \{\mu_1,\mu_2\}$ satisfy external stability and deterrence of external deviations because the deviation from $\mu_2$ to $\mu_3$ of coalition $\{h_1,(s_3,s_4)\}$ is deterred by the subsequent deviation of coalition $\{h_1,h_2,(s_1,s_2)\}$ from $\mu_3$ to $\mu_1$ where the couple $(s_3,s_4)$ is worse off than at $\mu_2$. Thus, the DEM farsighted stable set could be used to predict the matchings that are stable when agents are farsighted and the farsighted stable set does not exist. Indeed, every farsighted stable set is a DEM farsighted stable set but the reverse is not true. A characterization of the DEM farsighted stable set is out of the scope of this paper. 

\section{Concluding Remarks}
\label{sec:conc}
We adopt the notion of the farsighted stable set to determine which matchings are stable when agents are farsighted in matching markets with couples. We show that a singleton matching is a farsighted stable set if and only if the matching is stable. Thus, the property of stability of a matching is preserved when agents become farsighted in couples markets. However, other farsighted stable sets with multiple non-stable matchings can exist in couples markets with stable matchings. For markets without stable matching, we provide examples of markets with and without farsighted stable sets. For couples markets where the farsighted stable set does not exist, the DEM farsighted stable set could be used to predict the matchings that are stable when agents are farsighted. 

It has been assumed that each hospital has exactly one position to fill. However, our results can easily be adapted to situations where hospitals have multiple positions if hospitals' preferences are responsive over sets of students.\footnote{A hospital's preferences over groups of students are responsive if, for any two assignments that differ in only one student, it prefers the assignment containing the more preferred student.} Notice that in the proofs we assume that whenever a student leaves its position at a hospital, the hospital would accept an acceptable student to fill the vacant position. When hospitals have multiple positions, we then assume that a hospital always prefer to accept an acceptable student whenever the number of filled positions is lower than the capacity of the hospital. Under this assumption, the results of the paper remain valid for the case of hospitals with multiple positions. 

Recently, the notion of the myopic-farsighted stable set has been proposed to study the interaction between myopic and farsighted agents in one-to-one matching markets. See Herings, Mauleon and Vannetelbosch (2020a, 2020b) and Do\u{g}an and Ehlers (2023). Only when farsighted agents are present on one side of the market, the existence of myopic-farsighted stable sets is guaranteed. Moreover, the introduction of heterogeneous agents or of limited farsighted agents destabilize some stable matchings (the one obtained from the DA algorithm) and stabilize other matchings (the one obtained from the Top Trading Cycle algorithm) in school choice problems and priority-based matching problems. See Atay, Mauleon and Vannetelbosch (2022a, 2022b). An interesting extension would be to study matching with couples when agents are heterogeneous in their degree of farsightedness.

\section*{Acknowledgments}
Ata Atay is a Serra H\'{u}nter Fellow (Professor Lector Serra H\'{u}nter). Ata Atay gratefully acknowledges the support from the Spanish Ministerio de Ciencia e Innovaci\'{o}n research grant PID2020-113110GB-100/AEI/10.13039/501100011033, from the Generalitat de Catalunya research grant 2021SGR00306 and from the University of Barcelona research grant AS017672. Ana Mauleon and Vincent Vannetelbosch are, respectively, Research Director and Senior Research Associate of the National Fund for Scientific Research (FNRS). Financial support from the Fonds de la Recherche Scientifique - FNRS\ research grant T.0143.18 is gratefully acknowledged.

\end{document}